\newtheorem{definition}{Definition}
\newtheorem{theorem}[definition]{Theorem}
\newtheorem{corollary}[definition]{Corollary}
\newtheorem{lemma}[definition]{Lemma}
\newcommand{\XF}{{\cal F}}
\def\mmmddyyyy{\ifcase\month\or Jan\or Feb\or Mar\or Apr\or May\or
Jun\or Jul\or Aug\or Sep\or Oct\or Nov\or Dec\fi \space\number\day,
\number\year}
\def\hhmm{\ifnum\hour<10 0\fi\number\hour :%
  \ifnum\minutes<10 0\fi\number\minutes} 
\begin{document}

\title{A note on the triangle inequality for the Jaccard distance}

\author{{\em Sven Kosub}\\
Department of Computer \& Information Science, University of Konstanz\\
Box 67, D-78457 Konstanz, Germany\\
{\tt Sven.Kosub@uni-konstanz.de}}

\date{\today}

\maketitle

\begin{abstract}
Two simple proofs of the triangle inequality for the Jaccard distance in terms of nonnegative, monotone, submodular functions are given and discussed.
\end{abstract}

\noindent
The Jaccard index \cite{jaccard-1901} is a classical similarity measure on sets with a lot of practical applications in 
information retrieval, data mining, machine learning, and many more (cf., e.g., \cite{gower-2008}).
Measuring the relative size of the overlap of two finite sets $A$ and $B$, the Jaccard index $J$ and the associated Jaccard distance
$J_\delta$ are formally defined as:
\[
J(A,B)=_{\rm def} \frac{|A\cap B|}{|A\cup B|}, \qquad J_\delta(A,B)=_{\rm def} 1- J(A,B)=1-\frac{|A\cap B|}{|A\cup B|} = \frac{|A\triangle B|}{|A\cup B|}
\]
where $J(\emptyset,\emptyset)=_{\rm def} 1$.
The Jaccard distance $J_\delta$ is known to fulfill all properties of a metric, most notably, the triangle inequality---a fact 
that has been observed many times, e.g., via metric transforms
\cite{marczewski-steinhaus-1958,simovici-djeraba-2008,gardner-etal-2014}, embeddings in vector spaces (e.g., \cite{tanimoto-1958,lipkus-1999,gardner-etal-2014}), min-wise independent permutations \cite{charikar-2002}, or sometimes cumbersome arithmetics \cite{levandowsky-winter-1971,fujita-2013}. 
A very simple, elementary proof of the triangle inequality was given in \cite{gilbert-1972} using an appropriate 
partitioning of sets. 

Here, we give two more simple, direct proofs of the triangle inequality. 
One proof comes without any set difference or disjointness of sets. It is based only on the fundamental equation 
$|A\cup B|+|A\cap B|=|A|+|B|$. As such, the proof is generic and leads to (sub)modular versions of the Jaccard distance 
(as defined below). 
The second proof unfolds a subtle difference between the two possible versions.
Though the original motivation was to give a proof of the triangle inequality as simple as possible, 
the link with submodular functions is interesting in itself (as also recently suggested in \cite{gillenwater-etal-2015}).

%Measuring the relative size of the overlap of two finite sets $A$ and $B$,

%Throughout this note, we consider 
%a finite set system $(I,{\cal F})$ where $I$ is a finite, non-empty ground set $I$ and $\XF\subseteq {\cal P}(I)$ is at least closed under union and intersection, i.e., if $A,B\in\XF$ 
%then $A\cap B, A\cup B\in \XF$.
%Let $X$ be a finite, non-empty set.
%A non-empty set family $\XF\subseteq {\cal P}(X)$ is distributive if $\XF$ is closed under union and intersection, i.e., 
%if $A,B\in \XF$ then $A\cap B, A\cup B\in \XF$. 
%In this case, we say that $(X,\XF)$ is a distributive set-system, whereas $(X,{\cal P}(X))$ is referred to as the  
%full set-system.
%Note that each full set-system is a distributive one.

Let $X$ be a finite, non-empty ground set.
A set function $f:{\cal P}(X)\to \mathbb{R}$ is said to be {\em submodular} on $X$ if 
$f(A\cup B) + f(A\cap B) \le f(A) + f(B)$ for all $A,B\subseteq X$.
If all inequalities are equations then $f$ is called {\em modular} on $X$.
It is known that $f$ is submodular on $X$ if and only if the following condition holds (cf., e.g., \cite{schrijver-2003b}):
\begin{eqnarray}
f(A\cup \{x\}) - f(A) \ge f(B\cup\{x\}) - f(B) \qquad \textrm{for all $A\subseteq B\subseteq X$, $x\in \overline{B}$}\label{def:submodular2}
\end{eqnarray}
A set function $f$ is monotone if  $f(A)\le f(B)$ for all $A\subseteq B\subseteq X$; 
$f$ is nonnegative if $f(A)\ge 0$ for all $A\subseteq X$. 
Each nonnegative, monotone, modular function $f$ on $X$ can be written as $f(A)= \gamma + \sum_{i\in A} c_i$
where $\gamma, c_i\ge 0$ for all $i\in X$ (cf., e.g., \cite{schrijver-2003b}). Examples are set cardinality or
degree sum in graphs.
Standard examples of nonnegative, monotone, submodular set functions are matroid rank, network flow to a sink, 
entropy of sets of random variables, and  neighborhood size in bipartite graphs.
% \cite{frank-1993}

Let $f$ be a nonnegative, monotone, submodular set function on $X$.
For sets $A,B\subseteq X$, we define 
%the {\em submodular Jaccard index} $J_f$ as
%\[J_f(A,B)=_{\rm def}\frac{f(A\cap B)}{f(A\cup B)}\] and
two candidates for {\em submodular Jaccard distances}, $J_{\delta,f}$ and $J^\Delta_{\delta,f}$, as follows:
\[
J_{\delta,f}(A,B)=_{\rm def} 1-\frac{f(A\cap B)}{f(A\cup B)},\qquad J^\Delta_{\delta,f} =_{\rm def} 
\frac{f(A \triangle B)-f(\emptyset)}{f(A\cup B)}, \]
where $J_{\delta,f}(A,B)=J^\Delta_{\delta,f}(A,B)=_{\rm def} 0$ if $f(A\cup B)=0$.
It is clear that $0\le J_{\delta,f}(A,B)\le J^\Delta_{\delta,f}(A,B)$.
If $f$ is modular then $J_{\delta,f}= J^\Delta_{\delta,f}$. In particular,
for $f(A)=|A|$ (i.e., the cardinality of the set $A\subseteq X$), we obtain the standard Jaccard distance $J_\delta=J_{\delta,f}=J^\Delta_{\delta,f}$. 
%\noindent

First, we give a simple proof of the triangle inequality for $J_{\delta,f}$. 
Interestingly, this is only possible for modular set functions (see the third remark after Theorem \ref{theorem:mengen:exkurs-mengen}).

%The following lemma provides the main inductive argument leading to Ineq.~(\ref{ineq2}).

\begin{lemma}\label{lemma:mengen:exkurs-mengen}
Let $f$ be a nonnegative, monotone, submodular set function on $X$. Then, for all sets $A,B,C\subseteq X$, it holds that
\[ f(A\cap C)\cdot f(B\cup C) + f(A\cup C)\cdot f(B\cap C) ~\le~ f(C) \cdot \bigl(f(A)+f(B)\bigr).\]
\end{lemma}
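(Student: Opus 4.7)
The plan is to reduce the claimed inequality to something that depends only on the five numbers $f(A\cap C)$, $f(A\cup C)$, $f(B\cap C)$, $f(B\cup C)$, and $f(C)$, and then exploit that $f(C)$ lies between the intersection and union values (by monotonicity) in a sign-matching trick.

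First, I apply submodularity twice, to the pairs $(A,C)$ and $(B,C)$, obtaining
\[
f(A) \;\ge\; f(A\cap C) + f(A\cup C) - f(C), \qquad f(B) \;\ge\; f(B\cap C) + f(B\cup C) - f(C).
\]
Plugging these lower bounds into the right-hand side $f(C)\bigl(f(A)+f(B)\bigr)$, it suffices to prove the purely algebraic inequality
\[
f(A\cap C)\,f(B\cup C) + f(A\cup C)\,f(B\cap C) \;\le\; f(C)\bigl(f(A\cap C)+f(A\cup C)+f(B\cap C)+f(B\cup C)\bigr) - 2f(C)^2.
\]
Moving everything to one side, this is equivalent to
\[
\bigl(f(A\cap C)-f(C)\bigr)\bigl(f(B\cup C)-f(C)\bigr) + \bigl(f(A\cup C)-f(C)\bigr)\bigl(f(B\cap C)-f(C)\bigr) \;\le\; 0.
\]

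Now I invoke monotonicity: since $A\cap C\subseteq C\subseteq A\cup C$ and $B\cap C\subseteq C\subseteq B\cup C$, we have
\[
f(A\cap C)-f(C)\le 0,\quad f(A\cup C)-f(C)\ge 0,\quad f(B\cap C)-f(C)\le 0,\quad f(B\cup C)-f(C)\ge 0.
\]
Each of the two products in the reduced inequality is therefore a product of one nonpositive and one nonnegative factor, hence nonpositive, and their sum is nonpositive. This closes the proof.

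The only real obstacle is spotting the rearrangement in the third display; once it is written out, monotonicity makes both terms negative with no further work. Everything else is just bookkeeping, and nonnegativity of $f$ is used only implicitly to guarantee that $f(C)\ge 0$ so that multiplying the submodularity estimates by $f(C)$ preserves the direction of the inequality.
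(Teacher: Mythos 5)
Your proof is correct; I checked the algebra and every step goes through. Writing $a=f(A\cap C)$, $a'=f(A\cup C)$, $b=f(B\cap C)$, $b'=f(B\cup C)$, $c=f(C)$, your rearrangement $ab'+a'b-c(a+a'+b+b')+2c^2=(a-c)(b'-c)+(a'-c)(b-c)$ is exact, the sign pattern follows from the chains $A\cap C\subseteq C\subseteq A\cup C$ and $B\cap C\subseteq C\subseteq B\cup C$, and you correctly flag that nonnegativity enters only when multiplying the submodularity estimates by $f(C)$. The route is a genuine variant of the paper's: the paper never isolates a purely algebraic statement, but instead bounds each product separately in a ``crossed'' fashion, e.g.\ $f(A\cap C)\,f(B\cup C)\le f(A\cap C)\,(f(B)+f(C)-f(B\cap C))\le f(C)\,(f(B)-f(B\cap C)+f(A\cap C))$, using submodularity on the pair $(B,C)$ together with the monotonicity facts $f(A\cap C)\le f(C)$ and $f(B\cap C)\le f(B)$, and then adds this to the $A\leftrightarrow B$ swap so that the $\pm f(A\cap C)$ and $\pm f(B\cap C)$ terms cancel. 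Both arguments consume submodularity exactly twice (on $\{A,C\}$ and $\{B,C\}$) and monotonicity along the same chains, so they are close in substance; what your version buys is transparency --- the factorization makes the sign structure visible at a glance and shows that the lemma already holds with $f(A)+f(B)$ replaced by the smaller quantity $f(A\cap C)+f(A\cup C)+f(B\cap C)+f(B\cup C)-2f(C)$ --- while the paper's version avoids introducing the quadratic term $2f(C)^2$ and lands directly on the stated right-hand side.
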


\begin{proof} We easily obtain
%\begin{eqnarray*}
\begin{eqnarray*}
\lefteqn{f(A\cap C) \cdot f(B\cup C)}~~~~~~~~~\\
&\le&  f(A\cap C) \cdot \bigl(f(B)+f(C)-f(B\cap C)\bigr) ~~~~~~~~~~~~~\, \textrm{(submodularity of $f$)}\\
&\le&  f(C)\cdot \bigl(f(B)-f(B\cap C) + f(A\cap C)\bigr) ~~~~~~~~~~~~~~~\textrm{(monotonicity of $f$)}
\end{eqnarray*}
%\end{eqnarray*}
and, by swapping $A$ and $B$, $f(A\cup C) \cdot f(B\cap C)\le f(C)\cdot \bigl(f(A)-f(A\cap C) + f(B\cap C)\bigr).$
Overall,
% we conclude for the sum on the left-hand side
\begin{eqnarray*}
\lefteqn{f(A\cap C) \cdot f(B\cup C) + f(A\cup C) \cdot f(B\cap C)}~~~~~~~ \\
	&\le& f(C)\cdot \bigl(f(B)-f(B\cap C) + f(A\cap C) + f(A)-f(A\cap C)+f(B\cap C)\bigr) \\
	&=& f(C)\cdot \bigl(f(B) + f(A) \bigr)
	\end{eqnarray*}
This shows the lemma.
\end{proof}

%Notice that for $C=A\cap B$ we have an equation independent of $f$.

%An immediate corollary of the lemma is the following inequality for two sets.

\begin{corollary}\label{corollary:mengen:exkurs-mengen}
Let $f$ be a nonnegative, monotone, submodular set function on $X$. Then, for all sets $S,T\subseteq X$, it holds that
\[f(S\cap T)\cdot f(S\cup T) \le f(S)\cdot f(T).\]
\end{corollary}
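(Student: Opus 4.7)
The plan is to obtain the corollary as an immediate specialization of Lemma~\ref{lemma:mengen:exkurs-mengen}. The lemma's left-hand side is a sum of two products, whereas the corollary's left-hand side is a single product, so I would look for a substitution of the three free sets $A,B,C$ that makes one of the two summands on the left identical to a term on the right, thereby cancelling.

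First, I would collapse $A,B,C$ down to the two sets $S,T$ of the corollary by assigning each to an appropriate member of $\{S, T, S\cap T, S\cup T\}$. The natural attempt is $A := S\cap T$, $B := T$, $C := S$. Under this choice, $A\cap C = S\cap T$, $B\cup C = S\cup T$, $A\cup C = S$, and $B\cap C = S\cap T$, so the lemma reads
\[
f(S\cap T)\cdot f(S\cup T) + f(S)\cdot f(S\cap T) \;\le\; f(S)\cdot\bigl(f(S\cap T) + f(T)\bigr).
\]
The term $f(S)\cdot f(S\cap T)$ appears on both sides and can be subtracted (all values are nonnegative and finite, so no well-definedness issue arises), leaving exactly the claimed inequality.

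I do not anticipate any real obstacle; the only step requiring a bit of guesswork is spotting the substitution that produces the cancellation, and a symmetric choice such as $A := S$, $B := S\cup T$, $C := T$ works equally well. No further appeal to submodularity, monotonicity, or nonnegativity is needed beyond what Lemma~\ref{lemma:mengen:exkurs-mengen} already encodes.
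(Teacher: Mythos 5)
Your proof is correct and takes essentially the same route as the paper: the corollary is obtained by specializing Lemma~\ref{lemma:mengen:exkurs-mengen}. The paper uses the even more economical substitution $A:=S$, $B:=S$, $C:=T$, under which both summands on the left become $f(S\cap T)\cdot f(S\cup T)$ and the right-hand side becomes $2f(S)f(T)$, so the claim follows by halving rather than by cancellation; your substitutions are equally valid.
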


\begin{proof} Apply Lemma \ref{lemma:mengen:exkurs-mengen} to sets $A=_{\rm def} S$, $B=_{\rm def} S$ and $C=_{\rm def} T$.
\end{proof}

%Using Lemma \ref{lemma:mengen:exkurs-mengen} and Corollary \ref{corollary:mengen:exkurs-mengen} we can prove 
%the triangle inequality for $J_\delta$.

\begin{theorem}\label{theorem:mengen:exkurs-mengen}
Let $f$ be a nonnegative, monotone, modular set function on $X$. 
Then, for all sets $A,B,C\subseteq X$, it holds that
\[J_{\delta,f}(A,B) \le J_{\delta,f}(A,C)+ J_{\delta,f}(C,B).\]
\end{theorem}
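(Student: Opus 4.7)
The plan is to convert the triangle inequality into a polynomial inequality in the values of $f$ and then reduce it to Lemma~\ref{lemma:mengen:exkurs-mengen} together with one application of Corollary~\ref{corollary:mengen:exkurs-mengen}. Assuming first that all occurring unions have positive $f$-value (degenerate cases, where $f$ of a union vanishes, force the corresponding summands to be zero by monotonicity and nonnegativity, so they can be disposed of separately), rearranging the triangle inequality gives the equivalent form
\[
\frac{f(A\cap C)}{f(A\cup C)} + \frac{f(C\cap B)}{f(C\cup B)} \;\le\; 1 + \frac{f(A\cap B)}{f(A\cup B)}.
\]

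This is the place where modularity enters. The right-hand side equals $\bigl(f(A\cup B)+f(A\cap B)\bigr)/f(A\cup B)=\bigl(f(A)+f(B)\bigr)/f(A\cup B)$ by modularity. Clearing denominators, I need to prove
\[
\bigl(f(A\cap C)\cdot f(C\cup B)+f(A\cup C)\cdot f(C\cap B)\bigr)\cdot f(A\cup B) \;\le\; f(A\cup C)\cdot f(C\cup B)\cdot\bigl(f(A)+f(B)\bigr).
\]
Lemma~\ref{lemma:mengen:exkurs-mengen} bounds the bracketed factor on the left by $f(C)\cdot\bigl(f(A)+f(B)\bigr)$. Cancelling the common factor $f(A)+f(B)$ (or observing that the whole claim is trivial when it vanishes), the task collapses to the single auxiliary inequality
\[
f(C)\cdot f(A\cup B) \;\le\; f(A\cup C)\cdot f(C\cup B).
\]

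I would close this last step by applying Corollary~\ref{corollary:mengen:exkurs-mengen} to $S=A\cup C$ and $T=B\cup C$: then $S\cap T=(A\cap B)\cup C\supseteq C$ and $S\cup T=A\cup B\cup C\supseteq A\cup B$, so monotonicity gives
\[
f(C)\cdot f(A\cup B) \;\le\; f(S\cap T)\cdot f(S\cup T) \;\le\; f(S)\cdot f(T) \;=\; f(A\cup C)\cdot f(C\cup B),
\]
which completes the proof.

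The main obstacle is recognising where modularity must be used. Lemma~\ref{lemma:mengen:exkurs-mengen} naturally produces the quantity $f(A)+f(B)$, whereas the target inequality is stated in terms of $f(A\cup B)$ and $f(A\cap B)$; only the modular identity $f(A)+f(B)=f(A\cup B)+f(A\cap B)$ bridges these. This also makes transparent why the argument does not extend to genuinely submodular $f$: the identity degenerates into an inequality in the wrong direction, and the reduction breaks.
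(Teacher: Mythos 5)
Your proof is correct and follows essentially the same route as the paper: reduce to the equivalent inequality $\frac{f(A\cap C)}{f(A\cup C)}+\frac{f(B\cap C)}{f(B\cup C)}\le\frac{f(A)+f(B)}{f(A\cup B)}$ via modularity, bound the numerator with Lemma~\ref{lemma:mengen:exkurs-mengen}, and handle the denominator by applying Corollary~\ref{corollary:mengen:exkurs-mengen} to $S=A\cup C$, $T=B\cup C$ together with monotonicity. The only difference is presentational (you clear denominators where the paper chains fraction inequalities), and your closing remark on where modularity is indispensable matches the paper's third remark.
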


\begin{proof}
Say that a set $A$ is a null set iff $f(A)=0$. 
Observe that if at least one of the sets is a null set then the inequality is satisfied.
So, it is enough to show the equivalent inequality
\begin{eqnarray}
\frac{f(A\cap C)}{f(A\cup C)} + \frac{f(B\cap C)}{f(B\cup C)} ~\le~ 1+ \frac{f(A\cap B)}{f(A\cup B)} = \frac{f(A)+f(B)}{f(A\cup B)}\label{ineq2}
\end{eqnarray}
for arbitrary non-null sets $A,B,C\subseteq I$. This is seen as follows:
\begin{eqnarray*}
\lefteqn{\frac{f(A\cap C)}{f(A\cup C)} + \frac{f(B\cap C)}{f(B\cup C)}} ~~~~~ \\[1ex]
&=& \frac{f(A\cap C)\cdot f(B\cup C) + f(A\cup C)\cdot f(B\cap C)}{f(A\cup C)\cdot f(B\cup C)} \\[1ex]
&\le& \frac{f(C)\cdot\bigl(f(A)+f(B)\bigr)}{f(A\cup C)\cdot f(B\cup C)}
~~~~~~~~~~~~~~~~~~~~~~~~~~~~~~~~~~~~~~~~~~~~~~~~~ (\textrm{by Lemma \ref{lemma:mengen:exkurs-mengen}})\\[1ex]
&\le& \frac{f(C)\cdot\bigl(f(A)+f(B)\bigr)}{f\bigl((A\cup C)\cap (B\cup C)\bigr)\cdot f(A\cup B\cup C)}
~~~~~~~~~~~~~~~~~~~~~~~~~\ (\textrm{by Corollary \ref{corollary:mengen:exkurs-mengen}})\\[1ex]
&\le& \frac{f(C)}{f\bigl((A\cap B) \cup C\bigr)} \cdot \frac{f(A)+f(B)}{f(A\cup B)}
~~~~~~~~~~~~~~~~~~~~~~~~~~~~~~ (\textrm{monotonicity of $f$})\\[1ex]
&\le& \frac{f(A)+f(B)}{f(A\cup B)}
~~~~~~~~~~~~~~~~~~~~~~~~~~~~~~~~~~~~~~~~~~~~~~~~~~~~~ (\textrm{monotonicity of $f$})
\end{eqnarray*}
This proves the theorem.
\end{proof}

\noindent
{\em Remarks}: We comment on the proof of the triangle inequality for $J_{\delta,f}$:
\begin{enumerate}
\item It follows from Theorem \ref{theorem:mengen:exkurs-mengen} that the triangle inequality is valid for 
	the standard Jaccard distance $J_\delta$, 
	the generalized Jaccard distance given for vectors $x,y\in\mathbb{R}^n$ by
	\[1-\frac{\sum_{i=1}^n \min\ \{x_i,y_i\}}{\sum_{i=1}^n \max\ \{x_i, y_i\}}\]
	(with the subcase that $x_i=\mu_A(z)$ and $y_i=\mu_B(z)$ denote multiplicities of (occurrences of) $z$ 
	in multisets $A$ and $B$; cf.~\cite{kosters-laros-2007}), and the Steinhaus distance 
	\cite{marczewski-steinhaus-1958,gardner-etal-2014} (i.e., any set measures, including probability measures).
	We mention that all these results can equally easily be proven by the arguments in \cite{gilbert-1972}; however, 
	for modular functions satisfying $f(\emptyset)>0$, these arguments fail.
%	Moreover, Lemma \ref{lemma:mengen:exkurs-mengen} implies that the submodular Hamming distance $d_f(A,B)=_{\rm def}
%	f(A\triangle B)$ is a metric \cite{gillenwater-etal-2015} (apply the left-hand side of the inequality to 
%	$f(A \triangle B)\cdot (f(A\triangle C) +f(C\triangle B)\bigr)$.
\item Theorem \ref{theorem:mengen:exkurs-mengen} is true for nonnegative, monotone, modular functions defined over 
	distributive lattices; Lemma \ref{lemma:mengen:exkurs-mengen} and Corollary \ref{corollary:mengen:exkurs-mengen}
	also hold for nonnegative, monotone, submodular functions defined over distributive lattices. 
	Notice that $J^\Delta_{\delta,f}$ is not defined over all distributive lattices (see also the third remark after Theorem \ref{theorem:mengen:exkurs-mengen2}).
\item In general, Theorem \ref{theorem:mengen:exkurs-mengen} is not true for nonnegative, monotone, submodular functions: 
	Any set function $f$ such that $f(A)=f(B)=f(A\cup B)>f(A\cap B)\ge 0$ for non-empty, incomparable sets 
	$A,B$ refutes $J_{\delta,f}(A,B)\le J_{\delta,f}(A,A\cup B)+J_{\delta,f}(A\cup B,B)$.
	Concrete examples include linear cost functions with budget restrictions, i.e., $f(A)=\min\{B, \sum_{i\in A} c_i\}$,
	or the neighborhood size in a bipartite graph $G=(U\uplus V,E)$, i.e., $f(A)=|\Gamma(A)|$ where $A\subseteq U$ and 
	$\Gamma(A)=\bigcup_{u\in A}\{v\in V|\{u,v\}\in E\}$.
\end{enumerate}

Next we give a simple proof of the triangle inequality for $J^\Delta_{\delta,f}$.

\begin{theorem}\label{theorem:mengen:exkurs-mengen2}
Let $f$ be a nonnegative, monotone, submodular set function on $X$. 
Then, for all sets $A,B,C\subseteq X$, it holds that
\[J^\Delta_{\delta,f}(A,B) \le J^\Delta_{\delta,f}(A,C)+ J^\Delta_{\delta,f}(C,B).\]
\end{theorem}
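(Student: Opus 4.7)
The plan is to reduce everything to a single ``slack'' inequality among the three numerators and then combine it with simple monotonicity bounds on the denominators. First I would dispose of the degenerate cases: if $f(A\cup B)=0$, monotonicity and nonnegativity force $f(A\triangle B) = f(\emptyset) = 0$, so $J^\Delta_{\delta,f}(A,B)=0$ and the inequality is trivial by the stated convention; the cases $f(A\cup C) = 0$ or $f(C\cup B) = 0$ are symmetric. Hereafter assume all three denominators are strictly positive; set $D = f(A\cup B)$, $M = f(A\cup B\cup C)$, and $N_{XY} = f(X\triangle Y) - f(\emptyset)$ so that $J^\Delta_{\delta,f}(X,Y) = N_{XY}/f(X\cup Y)$.

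The heart of the proof is the ``slack'' inequality
\[
N_{AC} + N_{CB} \;\ge\; N_{AB} + M - D, \qquad (\star)
\]
which I would establish in three short steps. (i) Submodularity applied to $X := A\triangle C$ and $Y := C\triangle B$, combined with monotonicity applied to the inclusion $A\triangle B \subseteq X\cup Y$, yields $f(A\triangle B) + f(X\cap Y) \le f(A\triangle C) + f(C\triangle B)$. (ii) Any element of $C\setminus (A\cup B)$ lies in both $A\triangle C$ and $C\triangle B$, so $C\setminus(A\cup B) \subseteq X\cap Y$ and monotonicity gives $f(X\cap Y) \ge f(C\setminus(A\cup B))$. (iii) Submodularity applied to the disjoint pair $A\cup B$ and $C\setminus(A\cup B)$, whose union is $A\cup B\cup C$, yields $M + f(\emptyset) \le D + f(C\setminus(A\cup B))$. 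Chaining (i)--(iii) and subtracting $2f(\emptyset)$ delivers $(\star)$.

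Since $A\cup C, C\cup B \subseteq A\cup B\cup C$, monotonicity gives $f(A\cup C), f(C\cup B) \le M$, hence
\[
J^\Delta_{\delta,f}(A,C) + J^\Delta_{\delta,f}(C,B) \;\ge\; \frac{N_{AC} + N_{CB}}{M} \;\ge\; \frac{N_{AB}}{M} + \frac{M-D}{M}
\]
by $(\star)$. A short rearrangement reduces the desired inequality $N_{AB}/D \le N_{AB}/M + (M-D)/M$ to $(M-D)(N_{AB}-D) \le 0$, which holds since $M \ge D$ by monotonicity and $N_{AB} \le D$ (because $f(A\triangle B) \le f(A\cup B) = D$ by monotonicity and $f(\emptyset) \ge 0$). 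The main obstacle is step (ii): the set inclusion $C\setminus (A\cup B) \subseteq (A\triangle C)\cap (C\triangle B)$ is what lets one extract precisely the slack $M-D$ needed to absorb the crude bounds $f(A\cup C), f(C\cup B) \le M$. Without this observation, a naive combination of numerator subadditivity ($N_{AB} \le N_{AC} + N_{CB}$) with the denominator estimate $f(A\cup C) + f(C\cup B) \ge f(A\cup B) + f(C)$ is too weak, because $f(A\cup B)$ can be strictly smaller than either $f(A\cup C)$ or $f(C\cup B)$.
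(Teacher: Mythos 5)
Your argument is correct and is essentially the paper's proof in a different packaging: both bound the two right-hand denominators by $M=f(A\cup B\cup C)$, both harvest the slack $M-D$ from the part of $C$ lying outside $A\cup B$ (you read it off the intersection term $(A\triangle C)\cap(C\triangle B)\supseteq C\setminus(A\cup B)$ together with submodularity on the disjoint pair $A\cup B$ and $C\setminus(A\cup B)$, while the paper reads it off the union term via $(A\triangle B)\cup C_1\subseteq(A\triangle C)\cup(B\triangle C)$ and the diminishing-returns form of submodularity), and both close with the same rearrangement $(M-D)(D-N_{AB})\ge 0$. The only slip is calling the cases $f(A\cup C)=0$ or $f(C\cup B)=0$ ``symmetric'' to $f(A\cup B)=0$ --- they zero out a right-hand term rather than the left-hand side, so they are not trivial in the same way --- but this is harmless, since once $f(A\cup B)>0$ one has $M>0$ and the estimate $J^\Delta_{\delta,f}(A,C)\ge N_{AC}/M$ still holds when $f(A\cup C)=0$ because monotonicity then forces $N_{AC}=0$.
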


\begin{proof} We split the set $C$ into two disjoint sets $C_0\subseteq A\cup B$ and $C_1\subseteq \overline{A\cup B}$, 
both possibly empty, such that $C=C_0\cup C_1$. We obtain
\begin{eqnarray*}
\lefteqn{\frac{f(A\triangle C)-f(\emptyset)}{f(A\cup C)}+\frac{f(B\triangle C)-f(\emptyset)}{f(B\cup C)}} ~~~~~\\[1ex]
&\ge& \frac{f(A\triangle C)+f(B\triangle C)-2f(\emptyset)}{f(A\cup B\cup C_1)} ~~~~~~~~~~~~~~~~~~~~~~~~~~~~~~~~ (\textrm{monotonicity of $f$}) \\[1ex]
&\ge& \frac{f(A\triangle C \cup  B\triangle C)-f(\emptyset)}{f(A\cup B\cup C_1)}~~~~~~~~~~~~~~~~~~(\textrm{submodularity, monotonicity of $f$}) \\[1ex]
&\ge& \frac{f(A\triangle B \cup C_1)-f(\emptyset)}{f(A\cup B\cup C_1)} ~~~~~~~~~~~~~~~~~~~~~~~~~~~~~~~~~~~~~~~~~~ (\textrm{monotonicity of $f$}) \\[1ex]
&\ge& \frac{f(A\triangle B)}{f(A\cup B)}-\frac{f(\emptyset)}{f(A\cup B\cup C_1)} ~~~~~~~~~~~~~~~~~\ (\textrm{submodularity of $f$, Cond.~(\ref{def:submodular2})})\\[1ex]
 &\ge& \frac{f(A\triangle B)}{f(A\cup B)}-\frac{f(\emptyset)}{f(A\cup B)} ~~~~~~~~~~~~~~~~~~~~~~~~~~~~~~~~~~~~~~~~~ (\textrm{monotonicity of $f$})
\end{eqnarray*}
This shows the theorem.
\end{proof}

\noindent
{\em Remarks}: We comment on the proof of the triangle inequality for $J^\Delta_{\delta,f}$:
\begin{enumerate}
\item It follows once more from Theorem \ref{theorem:mengen:exkurs-mengen2} that the standard Jaccard distance, 
	the generalized Jaccard distance, and the Steinhaus distance satisfy the triangle inequality. 
	Moreover, $J^\Delta_{\delta,f}$ is also a (pseudo)metric for, e.g., linear cost functions with budget restrictions
	and the neighborhood size in bipartite graphs.
\item Theorem \ref{theorem:mengen:exkurs-mengen2} suggests that $J^\Delta_{\delta,f}$ is the right
	definition of a submodular Jaccard distance. As a consequence, one might say that the submodular
	Jaccard (similarity) index should be defined as the inverse submodular Jaccard distance, i.e.,
	\[J^\Delta_f(A,B)=_{\rm def} 1- J^\Delta_{\delta,f} = 1-\frac{f(A\triangle B)-f(\emptyset)}{f(A\cup B)}\]
	Again, if $f(A)=|A|$ then we obtain the standard Jaccard index $J=J^\Delta_f=1-J_{\delta,f}$.
\item %A monotone, submodular set function $f:{\cal P}(X)\to\mathbb{R}$ restricted to a family $\XF\subseteq {\cal P}(X)$ 
 	%which closed under union and intersection (i.e., if $A,B\in \XF$ then $A\cap B, A\cup B\in \XF$) 
	%is again monotone and submodular. 
	Though $J^\Delta_{\delta,f}$ might generally not be defined over a given distributive lattice, it can be seen 
	that for each nonnegative, monotone, submodular function $f:\XF\to\mathbb{R}$ defined on a family
	$\XF\subseteq {\cal P}(X)$ closed under union and intersection, there is a (not necessarily unique) 
	nonnegative, monotone, submodular extension $\overline{f}:{\cal P}(X)\to \mathbb{R}$ on $X$ such that 
	$\overline{f}(A)=f(A)$ for all $A\in \XF$ (e.g., \cite{topkis-1978}), so that $J^\Delta_{\delta,\overline{f}}$ 
	can be used instead.
\end{enumerate}

\bigskip

%%%%%%  bibliography
\noindent
{\small
{\bf Acknowledgments:} I am grateful to Ulrik Brandes (Konstanz) and Julian M\"uller (Konstanz) for helpful discussions.
}

\begin{small}

\end{small}

\end{document}